\documentclass[twocolumn,showpacs,preprintnumbers,amsmath,amssymb]{revtex4}


\usepackage{float}
\usepackage{graphicx}

\usepackage{multirow}

\usepackage{braket}
\usepackage{graphicx}
\usepackage{dcolumn}
\usepackage{bm}
\usepackage{amsthm}
\usepackage{amsmath}
\usepackage{amssymb}
\newtheorem{theorem}{Theorem}

\usepackage{algorithm}
\usepackage{braket}
\usepackage{caption}
\usepackage[colorlinks,linkcolor=blue]{hyperref}

\theoremstyle{definition}
\theoremstyle{definition}
\captionsetup[figure]{labelformat=simple, labelsep=period}


\begin{document}
	\title{\bf{Characterization of exact one-query quantum algorithms (ii): for partial functions }}
	\author{Zekun Ye$^{1}$, Lvzhou Li$^{1, 2, }$ }\thanks{lilvzh@mail.sysu.edu.cn.}%
	
	\affiliation{%
		$^1$ Institute of Computer Science Theory, School of Data and Computer Science, Sun Yat-Sen University, Guangzhou 510006, China
	}%

	\affiliation{%
		$^2$ Ministry of Education Key Laboratory of Machine Intelligence and Advanced Computing (Sun Yat-sen University), Guangzhou  {\rm 510006}, China
	}%
	
	
	\begin{abstract} The
 query model  (or black-box model) has attracted much attention  from the communities  of both classical and quantum computing. Usually, quantum advantages are revealed by presenting a quantum   algorithm that has  a better query complexity than its classical counterpart. For example, the well-known quantum algorithms including Deutsch-Jozsa algorithm, Simon algorithm and  Grover algorithm  all show  a considerable advantage of quantum computing from the viewpoint of query complexity. Recently we have considered in (Phys. Rev. A. {\bf 101}, 02232 (2020)) the problem: what
functions can be computed by an exact one-query quantum algorithm?
 This problem has been addressed for  total Boolean functions but still open for partial Boolean functions. 
 Thus, in this paper we continue to characterize the computational power of exact one-query quantum algorithms for partial Boolean functions by giving several necessary and sufficient conditions. By these conditions,  we construct some new functions that can be computed exactly by  one-query quantum  algorithms but have essential difference from the already known ones.	Note that before our work, the known functions that can be computed by exact one-query quantum  algorithms are all symmetric functions, whereas the ones constructed in this papers are generally asymmetric.
 \end{abstract}

	\maketitle

	\section{ Introduction}

	The  decision tree model  has been well studied in classical computing,  and focuses on problems such as the following: given a Boolean function $f:\{0,1\}^n\rightarrow \{0,1\}$,  how can we make  as few queries as possible  to the bits of $x$ in order to output the value of  $f(x)$? Quantum analog, called the quantum query model, has also attracted much attention in recent years \cite{Buhrman2002Complexity}. The implementation procedure of a quantum query model is  a  quantum query algorithm, which can be roughly described as follows:  it starts with a fixed state $|\psi_0\rangle$, and then performs the sequence of operations $U_0, O_x, U_1,  \ldots, O_x,U_t$, where $U_i$'s are unitary operators that do not depend on the input $x$ but the query $O_x$  does. This leads to the final state $ |\psi_x\rangle=U_tO_xU_{t-1}\cdots U_1O_xU_0|\psi_o\rangle$. The result is obtained by measuring the final state  $ |\psi_x\rangle$.

	The quantum query model can be discussed in two main settings: the exact setting and the bounded-error setting. A quantum query algorithm is said to compute a function $f$ exactly, if its output equals $f(x)$ with probability 1, for all inputs $x$. In this case, the algorithm is called an {\it exact quantum algorithm}. It is said to {\it compute $f$ with bounded error},  if its output equals $f(x)$ with a  probability greater than a constant, for all inputs $x$. Roughly speaking, the query complexity of a function $f$  is the number of queries  that an  optimal (classical or quantum) algorithm should make in the worst case to compute $f$. The classical deterministic query complexity of  $f$ is denoted by $D(f)$, and the quantum query complexity in the  exact setting is denoted by $Q_E(f)$. In this paper, we focus on quantum query algorithms in the exact setting  \cite{Chen2020Char,Ambainis2013Superlinear, Ambainis2015Exact, Deutsch1992Rapid, Midrijanis2004Exact, Mihara2003Deterministic, He2018Exact, Montanaro2015On, Ambainis2013Exact, Ambainis2017Exact, Cai2018Optimal, Ambainis2016Superlinear, Cleve1998Quantum, Vasilieva2006Computing, Aaronson2016separations, Ambainis2017separations, Farhi1998alimit, Hayes2002quantum, Mischenko2015quantum, Braunstein2007exact, brassard1997exact, Qiu2018Generalized, Qiu2020Revisiting}, where quantum advantages were shown by comparing $Q_E(f)$ and $D(f)$. For total Boolean functions, Beals et al. \cite{beals2001quantum} showed that exact quantum query algorithms can only achieve  polynomial speed-up over classical counterparts.  At the same time, Ambainis et al. \cite{Ambainis2015Exact} proved that exact quantum algorithms have advantages for almost all Boolean functions.  However, the biggest gap between $Q_E(f)$ and $D(f)$ is only a factor of 2 and is achieved by Deutsch algorithm for a long time. In 2013,  a breakthrough result  was obtained by Ambainis, showing   the first total Boolean function for which exact quantum algorithms have superlinear advantage over  classical deterministic algorithms \cite{Ambainis2013Superlinear}. Moveover, Ambainis  \cite{Ambainis2016Superlinear} improved this result and presented a nearly quadratic separation in 2016.
	
 For partial functions (promise problems),  exponential separations between exact quantum and classical deterministic query complexity were obtained in several papers \cite{Deutsch1992Rapid,brassard1997exact,Mihara2003Deterministic, Cai2018Optimal}. A typical example is  Deutsch-Jozsa algorithm \cite{Deutsch1992Rapid}. In addition, some work showed an exponential separation between quantum and randomized query complexity in the bound-error setting, such as Simon algorithm \cite{Simon1997On} and Shor algorithm \cite{Shor1994Discrete}.
 Recently, Childs and Wang \cite{Childs2020Can} proved that there is at most polynomial quantum speedup for (partial) graph property problems in the adjacency matrix model. On the contrary, in the adjacency list model for bounded-degree graphs, they exhibited a promise problem that shows an exponential separation between the randomized and quantum query complexities. Moreover, a series of work showed that for partial Boolean functions on $N$ variables, the quantum query complexity could be exponentially smaller (or even less) than the randomized query complexity. 
 Aaronson and Ambainis \cite{Aaronson2018Forrelation} proposed a promise problem called Forrelation and showed that this problem can be solved using 1 quantum query with bounded error, yet any randomized algorithm needs  $\widetilde{\Omega}(\sqrt{N})$ queries. They also showed that this separation is essentially optimal: any $t$-query quantum algorithm can be simulated by an $O(N^{1-1/2t})$-query randomized algorithm. Tal \cite{Tal2020Towards} gave a $O(1)$ vs. $N^{2/3-\epsilon}$ separation between the quantum and randomized query complexities of partial Boolean functions by a variant of $k$-fold Forrelation problem. Furthermore, for any positive integer $k$, Sherstov et al. \cite{Sherstov2020An} obtained a partial function on $N$ bits that has bounded-error quantum query complexity at most $\lceil k/2 \rceil$ and randomized query complexity $\tilde{\Omega}(N^{1-1/k})$. This separation of bounded-error quantum versus randomized query complexity is best possible, by the results of Aaronson and Ambainis \cite{Aaronson2018Forrelation}.

Recently, characterization of one-query quantum algorithms (that can make only one query) has received some attention \cite{	Aaronson2016Polynomials, Arunachalam2019Quantum,		Chen2020Char,Qiu2020Revisiting}.  Aarsonson et al. \cite{Aaronson2016Polynomials} and Arunachalam et al. \cite{Arunachalam2019Quantum} have presented a complete characterization of the Boolean functions that can be computed by a one-query quantum algorithm in the bounded-error setting,  but their results are not applicable to the exact case. 
We  considered  the problem of what Boolean functions can be computed by exact one-query quantum algorithms, by  proving that  a total  Boolean function $f:\{0,1\}^n \rightarrow \{0,1\}$  can be  computed by an exact one-query quantum algorithm if and only if $f(x)=x_{i_1}$ or $f(x)=x_{i_1} \oplus x_{i_2} $ (up to isomorphism) \cite{Chen2020Char}. However, this does not hold for partial functions.   On the one hand, it has been known that any symmetric partial Boolean function $f$ has $Q_E(f) = 1$ if and only if $f$ can be computed by Deutsch-Jozsa algorithm \cite{Qiu2020Revisiting}. On the other hand, it is unclear for the asymmetric partial functions.

All the above motivates us to consider the following question: what partial Boolean functions can be computed by an exact one-query quantum algorithm? We answer this question by giving some necessary and sufficient conditions.  Furthermore, by these conditions we obtain and discuss some new representative function examples that can be computed  by exact one-query quantum algorithms. These examples generalize the known function cases and fill a gap in the discussion of exact one-query asymmetric functions. Especially, some cases have essential difference from the known functions that can be computed by  Deutsch-Jozsa algorithm. Note that recently Xu and Qiu \cite{Xu2020Partial} have independently carried out an interesting work on the similar topic as in this paper.
	
The remainder of this paper is organized as follows. The query model and the problem we consider are given in Section \ref{Pre}. Some necessary and sufficient conditions are presented in Section \ref{Result}. The construction and discussion of new functions are shown in Section \ref{new functions}. Finally, a conclusion is made in Section \ref{Con} and some further problems are proposed.
	
\section{Preliminaries} \label{Pre}
	In this paper,  we consider  Boolean functions	$ f:\{0,1\}^n \rightarrow \{0,1\}$.  It is called a total function,  if it is defined for all $x\in\{0,1\}^n $. It is called a partial function, if it is defined on a subset $D\subset \{0,1\}^n$.  In the following, we first give an introduction about the query models, including both  classical and quantum cases, and then we describe  the problem  to be discussed.

	Given a Boolean function $ f:\{0,1\}^n \rightarrow \{0,1\}$, suppose $x=x_1x_2...x_n \in \{0,1\}^n $ is an input of $f$ and we use $x_i$ to denote its $i$-th bit. The goal of a query algorithm is to compute $f(x)$, given queries to the bits of $x$.
	
	\begin{figure}[ht]
		\centering
		\includegraphics[width=0.4\linewidth]{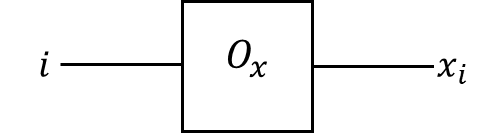}
		\caption{classical oracle}\label{CO}
		\label{classical oracle}
	\end{figure}
	
	In the classical case, the process of querying  $x$ is implemented by using the black box (called query oracle) shown in Figure \ref{CO}. We want to compute $f(x)$ by using the query oracle as few as possible. 	A classical deterministic algorithm for computing $f$  can be described by a {\it decision tree}. 
	For example, suppose that we want to use a classical deterministic algorithm to compute  $f(x)=x_1 \land (x_2 \vee x_3)$.  Then  a decision tree $T$ for that  is depicted  in Figure \ref{decision tree}. 	Given an input $x$, the tree is evaluated as follows. It starts at the root. At each node, if it is a leaf, then  its label is output as the result for $f(x)$; otherwise, it queries its label variable  $x_i$. If $x_i$ = 0, then we recursively evaluate the left subtree. Otherwise, we recursively evaluate the right subtree. The query complexity of tree $T$ denoted by $D(T)$ is its depth, and we have $D(T)=3$ in this example. Given $f$, there exist different decision trees to compute it, and the query complexity of $f$, denoted by $D(f)$, is defined as 
$D(f)=\min_TD(T).$

	\begin{figure}[H]
		\centering
		\includegraphics[width=0.3\linewidth]{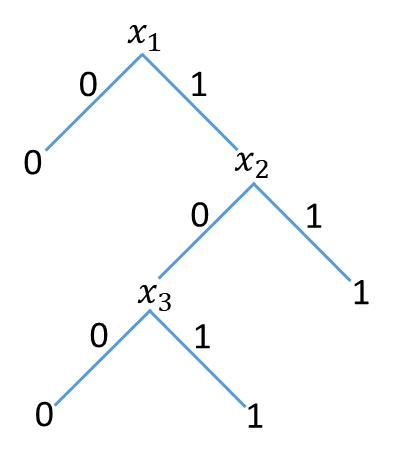}
		\caption{A decision tree $T$ for computing $f(x)=x_1 \land (x_2 \vee x_3)$}
		\label{decision tree}
	\end{figure}

	In the quantum case, we are able to query more than one bit each time due to quantum superposition. There are two equivalent query oracles: the bit flip oracle $\hat{O}_x$ defined by $\hat{O}_x|i,b\rangle = |i,b \oplus x_i\rangle$ for $i \in \{1,...,n\}$, $b \in \{0,1\}$; the phase oracle $O_x$ defined by $O_x\ket{i} = (-1)^{x_i}\ket{i}$ for all $i \in \{0,...,n\}$, where  $x_0=0$. In this paper, we use the phase oracle for convenience. A $T$-query quantum algorithm can be seen as a sequence of unitaries $U_TO_xU_{T-1}O_x...O_xU_0$, where $U_i$'s are fixed unitaries and $O_x$ depends on $x$. The process of computation is as follows:
	\begin{itemize}		
		\item[ (1)]  Start with an initial state $|\psi_{0}\rangle$.
		
		\item[ (2)] Perform the operators $U_0, O_x, U_1, O_x...U_T$ in sequence, and then we obtain the state $|\psi_x\rangle=U_TO_xU_{T-1}O_x...U_0|\psi_{0}\rangle$.
		
		\item[ (3)] Measure $|\psi_x\rangle$ with a $0-1$ positive operator-valued measurement  \cite{Nielsen2002Quantum}. The measurement result
		is regarded as the output of the algorithm.
	\end{itemize}

	In the above,  we use $r(x)$ to denote the measurement result of $|\psi_x\rangle$. Let $P[\mathcal{A}]$ denote the probability that event $\mathcal{A}$ occurs.  If it satisfies:
	
	\begin{equation}  \nonumber
	\forall x , P[ r(x)=f(x) ]  \geq 1-\epsilon   , 
	\end{equation}
	where $\epsilon < \frac{1}{2} $, then the quantum query algorithm is said to compute $f(x)$ with bounded error $\epsilon$. If it satisfies:
	\begin{equation} \nonumber
	\forall x, P[ r(x)=f(x) ]  =1,
	\end{equation} 
	then it is said to compute $f(x)$ exactly, and the algorithm is called an exact quantum algorithm.
	The exact quantum  query complexity of   $f$, denoted by $Q_E(f)$, is the minimum number of queries that a quantum query algorithm needs to compute $f$.
	The gap between $D(f)$ and $Q_E(f)$ is usually used to exhibit quantum advantages.

	In this paper, we want to characterize those partial Boolean functions $f$ that satisfy $Q_E(f)=1$. In other words, we consider this problem: what partial Boolean functions $f$ can be computed by an exact one-query quantum algorithm?  

	\section{Necessary and sufficient conditions} \label{Result}
First, we give some notation. For $x\in\{0, 1\}^n$, we  always associate it with one bit $x_0 = 0$, and let  $\ket{x} = \sum_i x_i|i\rangle$ be an $n+1$ dimension vector.
Define $x'$ by $x'_i = (-1)^{x_i}$ for any $i$. For a non-negative diagonal matrix $D$, let $|x_D\rangle = \sqrt{D}|x'\rangle$. If $tr(D) = 1$, then $|x_D\rangle$ is a unit vector. 
   
    We given some necessary and sufficient conditions as follows.
	
	\begin{theorem}
		\label{theorem1}
		The following statements  are equivalent to each other:
	\begin{itemize}
	\item 	[(a)] $f:\{0,1\}^n \rightarrow \{0,1\}$ can be computed by an exact one-query quantum algorithm. \label{Th(a)}
		
	\item 	[(b)] There exist a set of non-negative coefficients $\{c_i\}$ with $\sum_{i=0}^n c_i = 1$, such that $f(x) \neq f(y)$ implies $\sum_{i \in S}c_i = \frac{1}{2}$, where $ S = \{i|x_i \neq y_i\}$.

	\item 	[(c)]  There exists a non-negative diagonal matrix $D$ with $tr(D) = 1$, such that $f(x) \neq f(y)$ implies $\langle x_D|y_D \rangle = 0$.

	\item 	[(d)]  There exists a project operation $P$ and a non-negative diagonal matrix $D$ with $tr(D) = 1$, such that $f(x) = \langle x_D |P| x_D \rangle$.
\end{itemize}

	\end{theorem}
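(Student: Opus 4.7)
The plan is to establish the chain of implications (b)$\Leftrightarrow$(c)$\Leftrightarrow$(d)$\Rightarrow$(a)$\Rightarrow$(b). The entire argument is built around one algebraic identity: setting $c_i = D_{ii}$, a direct expansion yields
\begin{equation}\nonumber
\langle x_D \mid y_D \rangle \;=\; \sum_i c_i (-1)^{x_i+y_i} \;=\; 1 - 2\sum_{i\in S} c_i.
\end{equation}
This identity makes (b)$\Leftrightarrow$(c) a tautology: the orthogonality condition $\langle x_D\mid y_D\rangle = 0$ is exactly $\sum_{i\in S}c_i = \tfrac12$, and $\mathrm{tr}(D)=1$ corresponds to $\sum_i c_i = 1$. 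Note also that $\||x_D\rangle\|^2 = \mathrm{tr}(D) = 1$, so every $|x_D\rangle$ is a unit vector, which I will use repeatedly.

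For (c)$\Leftrightarrow$(d), let $V_b = \mathrm{span}\{|x_D\rangle : f(x)=b\}$. Condition (c) says $V_0\perp V_1$, so taking $P$ to be the orthogonal projection onto $V_1$ gives $P|x_D\rangle = f(x)\cdot|x_D\rangle$ and hence $\langle x_D|P|x_D\rangle = f(x)$. Conversely, if (d) holds, then $\langle x_D|P|x_D\rangle = 1$ with $\|x_D\|=1$ forces $P|x_D\rangle = |x_D\rangle$, and $\langle y_D|P|y_D\rangle = 0$ forces $P|y_D\rangle = 0$; when $f(x)\neq f(y)$ this gives $\langle x_D|y_D\rangle = \langle Px_D|y_D\rangle = \langle x_D|Py_D\rangle = 0$. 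For (d)$\Rightarrow$(a), I would construct the algorithm explicitly: prepare $|\phi\rangle = \sum_i \sqrt{c_i}|i\rangle$ (a unit vector by $\mathrm{tr}(D)=1$) via some $U_0$ from $|0\rangle$, apply the phase oracle to obtain $O_x|\phi\rangle = |x_D\rangle$, and measure with the POVM $\{I-P,\,P\}$. The probability of outcome $1$ is $\langle x_D|P|x_D\rangle = f(x)\in\{0,1\}$, so the algorithm computes $f$ exactly.

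For (a)$\Rightarrow$(b), I would take an arbitrary exact one-query algorithm with final state $|\psi_x\rangle = U_1 O_x U_0|\psi_0\rangle$ and measurement POVM $\{M_0,M_1\}$. Writing $U_0|\psi_0\rangle = \sum_i |i\rangle\otimes|\phi_i\rangle$ in the query register tensored with an auxiliary register, and setting $c_i = \|\phi_i\|^2$, the same computation yields $\langle\psi_x|\psi_y\rangle = \sum_i c_i(-1)^{x_i+y_i}$ with $\sum_i c_i = 1$. Exactness gives $\langle\psi_x|M_{f(x)}|\psi_x\rangle = 1$; since $\|\psi_x\|=1$, this forces $M_{f(x)}|\psi_x\rangle = |\psi_x\rangle$, so for $f(x)\neq f(y)$ we get $\langle\psi_x|\psi_y\rangle = \langle M_1\psi_x|\psi_y\rangle = \langle\psi_x|M_1\psi_y\rangle = 0$, i.e.\ $\sum_{i\in S}c_i = \tfrac12$, which is (b).

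The main obstacle I anticipate is the (a)$\Rightarrow$(b) step, which requires two somewhat delicate points: first, one must justify that a general two-outcome POVM acting on the final states with probability $1$ correctness actually behaves as a projector on those states (so that orthogonality of final states follows from $f(x)\neq f(y)$); and second, the auxiliary register must be handled carefully so that the inner products collapse to the clean form $\sum_i c_i(-1)^{x_i+y_i}$ independent of the structure of the $|\phi_i\rangle$. Once these two points are addressed, every remaining step amounts to repackaging the same identity in three different languages (coefficient sums, Gram matrices, and projections).
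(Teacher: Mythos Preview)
Your proposal is correct and follows essentially the same route as the paper: the paper proves the cycle (a)$\Rightarrow$(b)$\Rightarrow$(c)$\Rightarrow$(d)$\Rightarrow$(a) via exactly the inner-product identity, the projection onto $\mathrm{span}\{|x_D\rangle:f(x)=1\}$, and the explicit one-query algorithm you describe. Your handling of the two ``obstacles'' you flag is in fact more careful than the paper's---the paper simply asserts that perfect distinguishability forces orthogonality, whereas you justify it via $0\le M_b\le I$---so there is no gap to worry about.
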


	\begin{proof} $(a)\Rightarrow (b)$.
	Suppose we have an exact one-query quantum algorithm. let $|\psi_x\rangle = U_1 O_x U_0|\psi_0\rangle $. 
	 Since the algorithm is allowed to use auxiliary space, we assume 
	\begin{equation}
	\begin{aligned}
	|\phi_x\rangle &= O_x U_0|\psi_0\rangle \\
	&= O_x \sum_{i,j}\alpha_{ij}\ket{i}\ket{j} \\
	&= \sum_{i,j}\alpha_{ij}(-1)^{x_i}\ket{i}\ket{j}.
	\end{aligned}
	\end{equation}
	The assumption that $f$ can be computed exactly implies that $|\psi_x\rangle$ and $|\psi_y\rangle$ can be perfectly distinguished. Thus, the two states are orthogonal, i.e., $ \langle \psi_x|\psi_y\rangle=0$. Since unitary operators don't change the orthogonality between two states, equivalently, there is $\langle\phi_x|\phi_y\rangle = 0 \label{eq:cond}$, which means 
	\begin{equation}
	\begin{aligned}
	\langle\phi_x|\phi_y\rangle &= \sum_{ij}|\alpha_{ij}|^2(-1)^{x_i \oplus y_i} \\
	&= \sum_{i \notin S}\sum_j|\alpha_{ij}|^2- \sum_{i \in S}\sum_j |\alpha_{ij}|^2 = 0.
	\end{aligned}
	\end{equation}
	And because $\sum_{ij} |\alpha_{ij}|^2 = 1$, we have $\sum_{i \in S}\sum_j |\alpha_{ij}|^2 = 1/2.$  Now we construct the coefficients $\{c_i\}$ in item (b). For $i \in \{0,...,n\}$, let $c_i =\sum_j |\alpha_{ij}|^2$. Then we have
\begin{equation}
\sum_{i \in S}c_i =  \sum_{i \in S}\sum_j |\alpha_{ij}|^2 = \frac{1}{2}.
\end{equation}
As a result, we get a set of feasible coefficients $\{c_i\}.$
	
 $(b)\Rightarrow (c)$.	
	We define diagonal matrix $D$ by $D_{ii} = c_{i}$ for any $i$. 
	If $f(x) \neq f(y)$, then 
\begin{equation}
\begin{aligned}
\langle x_D |y_D\rangle 
& = \langle x'|D|y'\rangle \\
& =\sum_i c_{i} x'_i y'_i \\
&= \sum_{i:x_i = y_i}c_{i}- \sum_{i:x_i \neq y_i}c_{i} \\ 
&= \sum_{i \notin S}c_i-\sum_{i \in S}c_i =0.
\end{aligned}
\end{equation}

 $(c)\Rightarrow (d)$.
	By item (c), if $f(x) \neq f(y)$, then $\langle x_D|y_D \rangle =0$. Let $X = \{|x_D\rangle|f(x)=1\}$, $Y = \{|y_D\rangle|f(y) = 0\}$. Then $X \bot Y$. Select an orthonormal basis $\{|v_i\rangle\}$ of  $Span\{X\}$. 
	Let $g(x)=\sum_{i} \langle v_i|x_D\rangle^2$. If $f(x)=1$, then $|x_D\rangle \in Span\{X\}$, thus we have $g(x) = \| |x_D \rangle \|^2= 1$. If $f(x)=0$, then for any $i$, $\langle v_i|x_D \rangle = 0$, thus $g(x)=0$. As a result, we have $f(x)=g(x)$. 
	Let $P = \sum_{i} |v_i \rangle \langle v_i|$. 
	Then 
	\begin{equation}
	\begin{aligned}
	f(x) &= g(x) = \sum_{i} \langle v_i| x_D \rangle^2 \\
	&= \sum_i \langle x_D |v_i\rangle \langle v_i| x_D\rangle \\
	&= \langle x_D|(\sum_i|v_i\rangle \langle v_i|)| x_D\rangle \\
	&=\langle x_D|P|x_D\rangle.
	\end{aligned}
	\end{equation}

 $(d)\Rightarrow (a)$.
	Suppose 
	$f(x) = \langle x_D|P|x_D \rangle$. We give Algorithm \ref{algo1} to compute $f$ as follows, which uses only one quantum query.
\begin{algorithm}[H]
			\caption{One-query algorithm}
			\label{algo1}
			\textbf{Input:} $n$-bit Boolean string $x$. 
				
				\textbf{Output:} $f(x)$.
		
			\textbf{Procedure:}
			
			\begin{enumerate}
				\item Prepare the initial state $\sum_i \sqrt{D}_{ii} | i\rangle$.\label{procedure1}
				\item Perform the operation $O_x$ to the initial state and then obtain the state:
                \begin{align*}
                O_x\sum_i \sqrt{D}_{ii} | i\rangle
                &= \sum_i \sqrt{D}_{ii}  (-1)^{x_i}| i\rangle \\
                &=\sum_i \sqrt{D}_{ii} x_i' | i\rangle \\
                &= \sqrt{D}| x'\rangle \\
                &= |x_D \rangle.
                \end{align*}
				\item Measure the  register by measurement  basis $\{I-P,P\}$. If the result associated with $P$ is measured, then return $1$; else, return $0$. 
			\end{enumerate}
		
			\end{algorithm}
		One can see that the probability of output $1$ is $\langle x_D|P| x_D \rangle = f(x)$, and the probability of output $0$ is $1-\langle x_D|P|x_D\rangle = 1-f(x)$. Thus, the algorithm always outputs correct results.
		


\end{proof}
In summary, statements $(a)-(d)$ are equivalent to each other, which may offer a deeper insight into the problem of what Boolean functions can be computed by an exact one-query quantum algorithm. By Theorem \ref{theorem1}, one-query function $f$ satisfies $f(x) =  \sum_i \langle v_i|x_D \rangle^2$. 
Since $\langle v_i|x_D \rangle = \langle v_i|\sqrt{D}|x' \rangle$ and $x_i' = (-1)^{x_i} = 2x_i-1$, the degree of $\langle v_i|x_D \rangle$ is 1. Thus, the expression of $f$ in statement $(d)$ satisfies that $deg(f) \le 2$, which meets the conclusion from the polynomial method \cite{Buhrman2002Complexity}.


\section{New representative functions with one quantum query} \label{new functions}




In the following, we call a function that can be computed by an exact one-query quantum algorithm as a {\it one-query function} for the sake of simplicity. Here we find some new one-query functions by Theorem \ref{theorem1}. 
These examples help us understand the power of exact one-query algorithm better. 
To begin with, we list all known one-query functions (up to isomorphism) \cite{Deutsch1992Rapid, Qiu2020Revisiting, He2018Exact}, which can be computed by  Deusch-Josza algorithm: 

i) Deusch-Josza function $f_1:\{0,1\}^n \rightarrow \{0,1\}$: 
$$f_1(x)=
\begin{cases}
0,&  |x|=n/2\\
1,&  |x| = 0 \text{ or } n
\end{cases}.$$

ii) Symmetric  function $f_2:\{0,1\}^n \rightarrow \{0,1\}$: 
$$f_2(x)=
\begin{cases}
0,&  |x|=c \ (c \ge \lceil n/2 \rceil)\\
1,&  |x| = 0
\end{cases}.$$

Next, we construct some new functions inspired by Theorem \ref{theorem1} as follows. It is easy to check these functions satisfy the statement $(b)$ (see Table \ref{table2}).

iii) For any set of non-negative coefficients $\{c_i\}$ satisfying $\sum_{i=0}^n c_i = 1$, there exists a quasi-symmetric function $f_3:\{0,1\}^n \rightarrow \{0,1\}$: 
$$f_3(x)=
\begin{cases}
0,&  \hat{x}=1/2\\
1,&  \hat{x}=0 \text{ or } 1
\end{cases},$$
where $\hat{x} = \sum_{i=0}^n c_i x_i$.

iv) Function $f_4: \{0,1\}^4 \rightarrow \{0,1\}$: 
$$f_4(x)=
\begin{cases}
0,&  x=0000,0011,1100,1111\\
1,&   x=0101,0110,1001,1010
\end{cases}.$$

v) Function $f_5:\{0,1\}^{4n} \rightarrow \{0,1\}$ as Table \ref{table1}. 

\begin{table}
    \centering
    \begin{tabular}{c|c}
    \hline
         $x:f(x) = 1$ & $x:f(x) =0$ \\
         \hline
              $\underbrace{0...0}_{4n}$ \\
        $\underbrace{1...1}_{4n}$ &$|x|=2n,$\\
         $\underbrace{0...0}_{2n}\underbrace{1...1}_{2n}$  &$\sum_{i=1}^{2n}x_i = n$, and \\
        $\underbrace{1...1}_{2n}\underbrace{0...0}_{2n}$ & $\sum_{i=1}^n x_i + \sum_{i=2n+1}^{3n} = n.$\\ 
          $\underbrace{0...0}_{n}\underbrace{1...1}_{n}\underbrace{0...0}_{n}\underbrace{1...1}_{n}$ \\
          $\underbrace{1...1}_{n}\underbrace{0...0}_{n}\underbrace{1...1}_{n}\underbrace{0...0}_{n}$  \\
    \end{tabular}
    \caption{Function $f_5$}
    \label{table1}
\end{table}
  


As mentioned earlier, a one-query function $f$ has the expression: $f(x) =  \sum_i \langle v_i|\sqrt{D}|x' \rangle^2$. If a vector set $\{w_i\}$ satisfies $|v_i\rangle = \sqrt{D}|w_i\rangle$, then $f(x) =  \sum_i \langle w_i|D|x' \rangle^2$. Since $D_{ii} = c_i$ for any $i$, we give the corresponding $\{c_i\}$ and $\{w_i\}$ for above functions as Table \ref{table2}. For any above function and $i$, $w_i$ is an $n$-bit Boolean string. By $\{c_i\}$ and $\{w_i\}$, we can obtain $D$  and $P$  easily, and then use Algorithm \ref{algo1} to compute these functions.   
\begin{table}
    \centering
    \begin{tabular}{c|c|c}
        $f$ & $\{c_i\}$ & $\{w_i\}$ \\
        \hline
        $f_1$ & $c_{0} = 0, c_{i} = \frac{1}{n} (\forall i>0)$ & $w_1 = \underbrace{1...1}_{n}$ \\
        $f_2$ & $c_{0} = \frac{2c-n}{2c}, c_{i} = \frac{1}{2c} (\forall i>0)$ & $w_1 = \underbrace{1...1}_{n}$ \\
        $f_3$ & $\sum_i c_i =1$ & $w_1 = \underbrace{1...1}_{n}$ \\
        $f_4$ & $c_{i} = 0 (i \le 2), c_{i}=1/2 (i > 2)$ & $w_1 = 0011$ \\
        \hline
        \multirow{6}{*}{$f_5$} & \multirow{6}{*}{$c_{0} = 0, c_{i} = \frac{1}{4n} (\forall i>0)$} & $w_1 = \underbrace{1...1}_{4n}$ \\
        &&$w_2 = \underbrace{1...1}_{2n}\underbrace{0...0}_{2n}$ \\
        &&$w_3 = \underbrace{1...1}_{n}\underbrace{0...0}_{n}\underbrace{1...1}_{n}\underbrace{0...0}_{n}$ \\
        
    \end{tabular}
    \caption{function examples and their corresponding $\{c_i\}$ and $\{w_i\}$}
    \label{table2}
\end{table}

\subsection{Characteristics of new functions}
Different from the already known one-query functions, the new cases are generally asymmetric functions. In this way, our results fill a gap in the discussion of exact one-query asymmetric functions. Next, we describe the characteristics of these functions respectively. 

i) Actually, $f_1$ and $f_2$ are both  instances of $f_3$. If $c_0 = 0$ and $c_1 = \dots =c_n$, then $f_3$ degenerates into $f_1$; if  $c_0 \neq 0$ and $c_1 = \dots =c_n$, then $f_3$ degenerates into  $f_2$. Otherwise, $f_3$ is asymmetric. In this way, $f_3$ is a generalization of $f_1$ and $f_2$. It not only can represent all symmetric one-query functions, but also represents a broad class of asymmetric one-query functions.

ii) Unlike previous functions, $f_4$ is the first function have the following property: any function $f$ isomorphic to $f_4$ does not satisfy that $f^{-1}(1) \subseteq \{x:|x|=0$ or $n\}$. It is also worth mentioning that $f_4$ and $f_1$ have the same domain when $n=4$.

iii) 
$f_5$ is the first proper example represented as the sum of square form: $\sum_i \langle w_i|D|x' \rangle^2$, whereas all previous examples can be represented as single square expressions: $\langle w_1|D|x' \rangle^2$ (see Table \ref{table2}). 
As a result, $f_5$ is more general and representative than previous cases in all one-query functions.

	\section{Conclusion} \label{Con}
	In this paper we have presented several  necessary and sufficient conditions for partial Boolean functions being computed by exact one-query quantum algorithms. By these conditions, we have obtained  some new function examples which are asymmetric and have essential difference from the already known symmetric one-query functions.
We hope our results are helpful for further discussion	of the power of exact $k$-query quantum algorithms. Figuring out this problem is useful for  understanding in depth quantum query algorithms and inspiring us to find more problems with quantum advantages.
	
	
	\bibliographystyle{unsrt}

\begin{thebibliography}{30}
        
       
		
		\bibitem{Buhrman2002Complexity}
		H. Buhrman and R. de Wolf,  Theor. Comput. Sci. {\bf 288}, 21 (2002).
		 \bibitem{Chen2020Char}
        W. Chen, Z. Ye and L. Li, Rhys. Rev. A {\bf 101}, 02232 (2020).
		\bibitem{Ambainis2013Superlinear}
		A. Ambainis, In \emph{Proceedings of the
			forty-fifth annual ACM symposium on Theory of Computing} (ACM, Palo Alto, USA, 2013), pp. 891-900.
		
		\bibitem{Ambainis2015Exact}
		A. Ambainis, J. Gruska and S. Zheng, Quantum Inf. Comput. {\bf 15}, 435 (2015).
		
		\bibitem{Deutsch1992Rapid}
		D. Deutsch and R. Jozsa, Proc. R. Soc. Lond. A {\bf 439}, 553 (1992).
		
		\bibitem{Midrijanis2004Exact}
		G. Midrijanis, arXiv:quant-ph/0403168 (2004).
		
		\bibitem{Mihara2003Deterministic}
		T. Mihara and S. Sung, Comput. Complex. {\bf 12}, 162 (2003).
		
		\bibitem{He2018Exact}
		X. He, X. Sun, G. Yang, and P. Yuan, arXiv:1801.05717 (2018).
		
		\bibitem{Montanaro2015On}
		A. Montanaro, R. Jozsa, and G. Mitchison, Algorithmica {\bf 71}, 775 (2015).
		
		\bibitem{Ambainis2013Exact}
		A. Ambainis, J. Iraids and J. Smotrovs, In \emph{Proceedings of the 8th Conference on the Theory of Quantum Computation, Communication and Cryptography} (Guelph, Ontario, Canada, 2013), pp. 263-269.
		
		\bibitem{Ambainis2017Exact}
		A. Ambainis, J. Iraids and D. Nagaj, In \emph{ 43rd International Conference on Current Trends in Theory and Practice of Computer Science} (2017), pp. 243-255.
	

		
		\bibitem{Cai2018Optimal}
		G. Cai and D. Qiu, J. Comput. Syst. Sci. {\bf 97}, 83 (2018).
		
		\bibitem{Ambainis2016Superlinear}
		A. Ambainis, SIAM J. Comput. {\bf 45}, 617 (2016).
		
		\bibitem{Cleve1998Quantum}
		R. Cleve, A. Ekert, C. Macchiavello and M. Mosca, Proc. R. Soc. Lond. A  {\bf 454}, 339 (1998).
		
		\bibitem{Vasilieva2006Computing}
		A. Vasilieva and T. Mischenko-Slatenkova, arXiv:quant-ph/0607022 (2006).
		
		\bibitem{Aaronson2016separations}
		S. Aaronson, S. Ben-David and R. Kothari, In \emph{Proceedings of the 48th Annual ACM Symposium on Theory of Computing} (ACM, Cambridge, USA, 2016), pp. 863-876.
		
		\bibitem{Ambainis2017separations}
		A. Ambainis, K. Balodis, A. Belovs, T. Lee, M. Santha and J. Smotrovs, J. ACM {\bf 64}, 1 (2017).
		
		\bibitem{Farhi1998alimit}
		E. Farhi, J. Goldstone, S. Gutmann, and M. Sipser, Phys. Rev. Lett. {\bf 81}, 5442 (1998).
		
		\bibitem{Hayes2002quantum}
		T. Hayes, S. Kutin and D. van Melkebeek, Algorithmica {\bf 34}, 480 (2002).
		
		\bibitem{Mischenko2015quantum}
		T. Mischenko-Slatenkova, A. Vasilieva, I. Kucevalovs and R. Freivalds, In \emph{Descriptional Complexity of Formal Systems - 17th International Workshop, DCFS 2015} (Springer, Waterloo, Canada, 2015), pp. 177-184.
		
		
		\bibitem{Braunstein2007exact}
		S. L. Braunstein, B. Choi, S. Ghosh, and S. Maitra, J. Phys. A {\bf 40}, 8441 (2007).
		
		\bibitem{brassard1997exact}
		G. Brassard and P. H{\o}yer, In \emph{Fifth Israel Symposium on Theory of Computing and Systems, ISTCS 1997} (IEEE, Ramat-Gan, Israel, 1997), pp. 12-23.
		
		\bibitem{Qiu2018Generalized}
		D. Qiu and S. Zheng, Phys. Rev. A {\bf 97}, 062331 (2018).
		
		\bibitem{Qiu2020Revisiting}
		D. Qiu and S. Zheng, Inf. Comput. {\bf 275}, 104605 (2020). Also arXiv:1603.06505.
		
		\bibitem{beals2001quantum}
		R. Beals, H. Buhrman, R. Cleve,  M. Mosca and R. de Wolf, J. ACM {\bf 48}, 778 (2001).
		
			
		\bibitem{Simon1997On}
		D. R. Simon, SIAM J. Comput. {\bf 26}, 1474 (1997).  
		
		\bibitem{Shor1994Discrete}
		P. W. Shor, In \emph{Proceedings of the 35th annual symposium on foundations of computer science} (IEEE, Santa Fe, USA, 1994), pp. 124-134.
	
		
		\bibitem{Childs2020Can}
	       A. M. Childs and D. Wang, arXiv:2001.10520 (2020).
	       
	   \bibitem{Aaronson2018Forrelation}
	       S. Aaronson and A. Ambainis, SIAM J. Comput. {\bf 47}, 982 (2018).
	       
	   \bibitem{Tal2020Towards}
	       A. Tal, In \emph{Proceedings of the 61st annual symposium on foundations of computer science} (2020). Also arXiv:1912.12561.
		
       \bibitem{Sherstov2020An}
        A. A. Sherstov, A. A. Storozhenko and P. Wu, arXiv:2008.10223 (2020).

		
		\bibitem{Aaronson2016Polynomials}
		S. Aaronson, A. Ambainis, J. Iraids, M. Kokainis and J. Smotrovs, In \emph{Conference on Computational Complexity} (LIPIcs, Tokyo, Japan, 2016), pp. 25:1-25:19. 
		
		\bibitem{Arunachalam2019Quantum}
		S. Arunachalam, J. Briet and C. Palazuelos, SIAM J. Comput. {\bf 48}, 903 (2019).

        \bibitem{Xu2020Partial}
		G. Xu and D. Qiu, arXiv:2007.10924 (2020).
		
		\bibitem{Nielsen2002Quantum}
		M. A. Nielsen and I. Chuang, Quantum Computation and Quantum Information (Cambridge University Press, Cambridge, 2002).
		
		

	\end{thebibliography}

\end{document}